\newcommand{\setmargins}[1]{ 
	\setlength{\textwidth}{8.5in} 
	\addtolength{\textwidth}{-#1}  \addtolength{\textwidth}{-#1}
	\setlength{\oddsidemargin}{-1in} \setlength{\evensidemargin}{-1in} 
	\addtolength{\oddsidemargin}{#1} \addtolength{\evensidemargin}{#1}
	}
\newcommand{\twosp}{} \newcommand{\onesp}{} 
\newenvironment{singlespace}{\onesp}{\twosp}
\newtheorem{thm}{Theorem}[section] 
\newtheorem{lemma}[thm]{Lemma}     
\newenvironment{proof}%
{\smallskip\par\noindent{\bf Proof:}\ }{$\Box$\medskip\par}
{\smallskip\par\noindent{\bf Proof of #1:}\ }{$\Box$\medskip\par}
\newcounter{qnum}
\begin{document}
\bibliographystyle{plain}

\title{A Randomized Sublinear Time Parallel GCD Algorithm for the EREW PRAM}

\author{
  Jonathan P.\ Sorenson\\
  Computer Science and Software Engineering \\
  Butler University, Indianapolis IN, USA \\
  \texttt{sorenson@butler.edu} \\
  \texttt{http://www.butler.edu/$\sim$sorenson}
}

\date{\today}

\maketitle

\begin{abstract}
We present a randomized parallel algorithm that 
  computes the greatest common divisor
  of two integers of $n$ bits in length
  with probability $1-o(1)$
  that takes
  $O( n \log\log n / \log n)$ time using
  $O(n^{6+\epsilon})$ processors for any $\epsilon>0$
  on the EREW PRAM parallel model of computation.
The algorithm either gives a correct answer or reports failure.

We believe this to be the first randomized sublinear time 
  algorithm on the EREW PRAM for this problem.

\end{abstract}

\noindent
\textbf{Keywords:} Parallel algorithms, randomized algorithms,
  algorithm analysis, greatest common divisor,
  number theoretic algorithms, smooth numbers.
\openup 0.5\baselineskip
  \section{Introduction}\label{sec:intro}

The parallel complexity of computing integer greatest common divisors
  is an open problem (see \cite{BGH82}),
  and no new complexity results have been published since the early 1990s.
This problem is not known to be either $\cal P$-complete or in $\cal NC$
  \cite{GHR,KR90,Reif}.

The first sublinear time parallel algorithm
  that uses a polynomial number of processors is due to
  Kannan, Miller, and Rudolph \cite{KMR87}.
Adleman and Kompella \cite{AK88} presented a randomized algorithm that runs in
  polylog time, but uses a superpolynomial, yet
  subexponential number of processors.
The fastest currently known algorithm is due to Chor and Goldreich \cite{CG90}
  which takes $O(n/\log n)$ time using $O(n^{1+\epsilon})$ processors.
See also \cite{Sorenson94}, and Sedjelmaci \cite{Sedjelmaci2008} who showed
  a clear way to do extended GCDs in the same complexity bounds.
However, all of these algorithms use the concurrent-read concurrent-write (CRCW)
  parallel RAM (PRAM) model of computation.

The algorithms of Chor and Goldriech \cite{CG90} 
  and the author \cite{Sorenson94}
  can be readily modified for the weaker 
  concurrent-read exclusive-write (CREW) PRAM
  to obtain running times of $O(n \log\log n / \log n)$ using a polynomial
  number of processors.
And of course one can take a CRCW PRAM algorithm and emulate it on an
  exclusive-read exclusive-write (EREW) PRAM at a cost of a factor of
  $O(\log n)$ in the running time, giving linear time algorithms for the
  EREW PRAM using a polynomial number of processors.

In this paper, we present what we believe is the first sublinear time,
  polynomial processor
  EREW PRAM algorithm for computing greatest common divisors.
Note that the EREW PRAM is weaker than the CREW or CRCW PRAM models of
  parallel computation.
We do make use of random numbers in a fundamental way.

\begin{thm}\label{thm:main}
  There exists a randomized algorithm to compute the 
  greatest common divisor of
  two integers of total length $n$ in binary 
  with probability $1-o(1)$ 
  in $O(n \log\log n / \log n)$ time
  using a polynomial number of processors on the EREW PRAM.
\end{thm}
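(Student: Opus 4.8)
The plan is to start from the $O(n\log\log n/\log n)$-time CREW PRAM algorithm obtained by adapting the methods of \cite{CG90,Sorenson94} (as noted above) and to re-implement it on the strictly weaker EREW PRAM within the same time bound, paying only in the processor count. That algorithm proceeds in reduction phases: each phase examines only $O(\log n)$ of the leading and trailing bits of the current pair $(a,b)$, consults precomputed tables to read off a short integer transformation --- a matrix with small entries together with a divisor --- and applies it to the full-length operands, so that after $O(n/\log^2 n)$ phases $\gcd(a,b)$ has been reduced to a $\gcd$ of $O(1)$-bit integers. The only features that prevent this from already being an EREW algorithm are concurrent reads, and these are of just two kinds: reads of the fixed precomputed tables, and the read, by the $\Theta(n/\log n)$ processors that hold the digits of $a$ and $b$, of the transformation freshly produced in the current phase.

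Concurrent reads of the static tables I would eliminate by precomputing each table once and then replicating it to as many copies as there are processors that ever consult it; the replication is a single binary-tree broadcast costing $O(\log n)$ time, which is absorbed into the overall running time, and --- since the dominant table is indexed by a pair of $\Theta(\log n)$-bit residues --- it is precisely what pushes the processor bound up to $n^{6+\epsilon}$. The dynamic concurrent read I would eliminate by organizing a phase so that the data that must actually reach the digit-processors is only $O(\log n)$ machine words: within a phase the expensive part --- running the $\Theta(\log n)$ reduction steps and composing their small matrices --- is carried out sequentially by a single processor out of a handful of short low-order residues, using $O(\log\log n)$-time digit-level arithmetic per step against the replicated tables, after which just the composed transformation is disseminated through a pipelined broadcast tree in $O(\log n)$ time and applied to the full-length operands in $O(\log n)$ more. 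Each phase then costs $O(\log n\log\log n)$ EREW time, and over $O(n/\log^2 n)$ phases the total is $O(n\log\log n/\log n)$.

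Randomness enters because a phase reorganized this way runs only on short, low-order surrogates of $a$ and $b$ and cannot afford, say, an exact comparison of $n$-bit numbers inside its inner loop; I would drive each reduction step by a randomly chosen modulus (and, where a deterministic table would be astronomically large, locate the small multipliers by a parallel randomized search) and argue, via the standard density estimates for smooth numbers, that the probability --- taken over the whole run of $\Theta(n/\log n)$ reduction steps --- that the inner loop ever fails to make its quota of progress, or leaves behind an auxiliary factor too large to be smooth and hence not cheaply removable at the end, is $o(1)$. Soundness is unconditional: no reduction step can lose a prime factor of $\gcd(a,b)$, because the part of $\gcd(a,b)$ supported on the few small primes that can appear in the moduli is extracted from the inputs beforehand; so the algorithm verifies its output by a single divisibility test against $a$ and $b$ and reports failure exactly on the $o(1)$-probability bad event. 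The step I expect to be the main obstacle is this probabilistic analysis --- bounding, simultaneously over all $\Theta(n/\log n)$ reduction steps, the chance of insufficient progress or of a non-smooth leftover, which is where the smooth-number density bounds do the real work --- together with the careful accounting needed to confirm that the reorganized, table-replicated, digit-parallel phase really does run in $O(\log n\log\log n)$ EREW time with polynomially many processors.
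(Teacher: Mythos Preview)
Your route is genuinely different from the paper's, and its central step carries a real gap. The paper does \emph{not} port the CREW version of \cite{CG90,Sorenson94} to the EREW model by eliminating concurrent reads; it designs a new reduction. At iteration $i$ it draws $2B\log n$ random multipliers $a$ in parallel (with $B=n^2$), forms $r=au_i\bmod v_i$ for each, strips from each $r$ all prime factors $\le B$, and keeps the smallest result $s_i$. Theorem~\ref{thm:smooth} --- a lower bound on the density of integers below $x$ possessing a $B$-smooth divisor exceeding $\exp(c(\log B)^2/\log\log B)$ --- guarantees that with probability $\ge 1/B$ a single random $r$ shrinks by $\Theta((\log B)^2/\log\log B)$ bits after stripping, so with probability $1-O(1/n^2)$ one of the $2B\log n$ trials succeeds. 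Each iteration is one round of parallel multiplication, division, and trial division against a fixed prime list, all $O(\log n)$ EREW time by \cite{BCH86}, and there are $O(n\log\log B/(\log B)^2)$ iterations. Randomness here is not a device for dodging concurrent reads; it is what manufactures the bit reduction.

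Your plan hinges on batching $\Theta(\log n)$ $k$-ary reduction steps while working only from $O(\log n)$-bit surrogates, but you do not justify this and it does not obviously fit the time budget: after one step $(u,v)\mapsto(v,(au+bv)/k)$ the relevant bits of the new pair depend on bits of $u,v$ outside the original window, and keeping the surrogate faithful across $\Theta(\log n)$ dependent steps inflates its length, which in turn pushes the per-step arithmetic past your $O(\log\log n)$ claim. More tellingly, the algorithms of \cite{CG90,Sorenson94} are deterministic, so if your replication-and-broadcast port worked as stated it would already yield a \emph{deterministic} sublinear EREW algorithm; yet you bolt on randomness and smooth-number densities in a way that does not connect to any concrete step (the spurious factors in those algorithms are bounded by construction, and ``randomly chosen moduli'' do nothing to make a short surrogate track the true operand). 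That mismatch signals that the gap is real rather than merely expository: the missing idea is a per-iteration reduction that is both superlogarithmically large in bits and computable in a single $O(\log n)$ EREW round without chaining dependent steps --- precisely what the paper's random-multiplier-plus-smooth-divisor trick supplies.
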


In the next section we describe our algorithm, and in Section \ref{sec:anal}
  we prove correctness, give a complexity analysis, and flesh out the details
  of the algorithm.
We conclude in Section \ref{sec:smooth} 
  with a simple result on the relative
  density of integers with large polynomially smooth divisors, which is
  needed for the analysis of the algorithm.

\nocite{BS,CP}
  \section{Algorithm Description}\label{sec:alg}

Define the inputs as $u,v$ of total length $n$ in binary.
Let $B$, our small prime bound, be defined as
 $B = B(n):=n^2$.  
A larger value for $B$ can be chosen, so long as $\log B = o(n)$,
  but correctness would be compromised if $B$ were significantly smaller
  (see Section \ref{sec:correct}).
\begin{enumerate}
  \item
    Find a list of primes up to $B$.
    Also, for each prime $p\le B$, compute and save $p^e$ for
      $e=1\ldots \lfloor n/\log_2 p \rfloor$.

  \item
    Remove and save common prime factors of $u,v$ that are $\le B$,
      and let $u_0, v_0$ denote these modified inputs.
    WLOG we assume $u_0 \ge v_0$.

  \item
    \textbf{Main Loop}. Repeat while $u_iv_i\ne 0$.
      Here $i$ indicates the current loop iteration, starting at $i=0$.

    \begin{enumerate}
    \item
    For $j:=1$ to $2B\log n$ in parallel do:
    \begin{enumerate}
       \item 
           Choose $a_{ij}$ uniformly at random from $1\ldots v_i-1$.
       \item
           Compute $r_{ij}:= a_{ij}u_i \bmod v_i$.
       \item
         Compute $s_{ij}$ as $r_{ij}$ with all prime factors $p\le B$ removed.\\
         (We elaborate on how to do this below.)
    \end{enumerate}
    \item Find $s_i := \min_j \{ s_{ij} \}$.
       Let $j_{\min}$ denote the value of $j$ for which $s_i=s_{ij}$,
       and for later reference, let $a_i=a_{ij_{\min}}$.
    \item $u_{i+1}:=v_i$; $v_{i+1}:=s_i$.
    \end{enumerate}

  \item $u_i+v_i$ is, with probability $1-o(1)$,
    equal to $\gcd(u_0,v_0)$ (as we show below).
    If we err, it is by including spurious factors that do not belong,
      so verify that $u_i+v_i$ evenly divides both $u_0, v_0$, and if not,
      report an error.
    Otherwise, include any saved common prime factors found in step 2 above,
    and the algorithm is complete.

\end{enumerate}
  \section{Algorithm Analysis}\label{sec:anal}

In this section we
  prove correctness, and compute the parallel complexity of
  our algorithm from the previous section.

\subsection{Correctness}\label{sec:correct}

Note that in Steps 2 and 4 we handle any prime divisors $\le B$ 
  of the $\gcd(u,v)$, so WLOG we can assume either $\gcd(u,v)=1$ or
  $\gcd(u,v)>B$.

At iteration $i$ of the main loop, we perform the transformation
  $$ (u_i, v_i) \rightarrow (v_i, s_i). $$
Since $s_i$ is equal to $a_iu_i \bmod v_i$, ignoring factors below $B$,
  this transformation will only fail to preserve the greatest common divisor
  if $a_i$ and $v_i$ share a common factor.
Furthermore, this common factor must be composed only of primes
  exceeding $B$.
Since $a_i$ is chosen uniformly at random, the probability 
  $a_i$ and $v_i$ share a prime factor larger
  that $B$ is at most
$$
   \sum_{p|v_i,\, p>B} \frac{1}{p}
    \quad\le\quad 
   \sum_{p|v_i,\, p>B} \frac{1}{B} 
    \quad\le\quad 
    \frac{\log_B v_i}{ B} 
    \quad=\quad O\left( \frac{1}{n\log n} \right).
$$
As we will see below, with high probability, the number of main loop
  iterations is $o(n)$.
Thus, the probability that any of the $a_i$ values introduces a spurious
  factor is $o(1)$.

Note that in \cite{Sorenson03},
  a similar, but not identical, transformation was analyzed. 
It was observed that in practice,
  with \textit{no} removal of small prime divisors, the expected number of
  bits contributed by spurious factors was constant per main loop iteration.

\subsection{Runtime Analysis}

First we calculate the number of main loop iterations,
  and then we describe how each iteration can be computing in
  $O(\log n)$ time using a polynomial number of processors.

\subsubsection{Main Loop Iterations}

Let $W:= 0.5 (\log B)^2/\log\log B$.
Then by Theorem \ref{thm:smooth}, which we prove in the next section,
  the length of $s_{ij}$ is smaller than $r_{ij}<v_i$ by at least $\Theta(W)$ 
  bits with probability at least $1/B$.
(Note that we chose $0.5$ to get a clean $1/B$ probability - other choices
  for the constant can be made to work with the right adjustments.)

So, the probability all $2B\log n$ choices for $j$ fail to have
  $\log s_{ij} \le \log v_i - W$ is 
  $$\left( 1- \frac{1}{B} \right)^{2B\log n} 
     \quad=\quad O\left(\frac1{n^2}\right)  .$$
So, with probability $1- O(1/n^2)$, $\log s_i \le \log v_i - W$.

We remove roughly $(\log B)^2/\log \log B$ bits each main loop iteration.
Thus, the number of main loop iterations is
  $O( n \log\log B / (\log B)^2 ) = o(n)$.
The probability that \textit{any one} loop iteration fails to remove
  the needed $\Theta(W)$ bits is $O(1/n)$, so the probability we exceed this
  number of main loop iterations and terminate without
  computing an answer is $o(1)$.

\subsubsection{Computation Cost and Algorithm Details}

Unless stated otherwise, cost is given for the EREW PRAM.
For a brief overview of the cost of parallel arithmetic, see
  \cite[Section 6.2]{Sorenson94}.
\begin{description}
  \item[Step 1.]
    We can find the primes $\le B$ in $O(\log B)$ time using $O(B)$ processors
    (see \cite{SP94}).
    For each prime $p\le B$ and $e\le n$, we can compute $p^e$ in
     at most
      $O(\log n)$ multiplications, each of which takes $O(\log B)$ time
      using $B^{1+o(1)}$ processors \cite{SS71}.
      See also \cite[Theorem 12.2]{Reif}.
    As there are $O(B/\log B)$ primes, this is
    $O(\log n \log B)$ time using $nB^{2+o(1)}$ processors.
  \item[Step 2.]
    For each prime $p$ and exponent $e$, we assign a group of processors to
      see if $p^e$ divides $u$ but $p^{e+1}$ does not.
    Division takes $O(\log n)$ time using $n^{1+\epsilon}$ processors
      for any $\epsilon>0$ using
      the algorithm of Beame, Cook, and Hoover \cite{BCH86}, 
      giving a total processor count
      of $O(n^{2+\epsilon}B/\log B)$.

    The result is a vector of the form
      $[p_k^{e_k}]$ that lists the primes dividing $u$ with maximal exponents.
    Since there are at most $n/\log B$ integers in the vector $>1$,
      and they total at most $n$ bits (their product is $\le u$),
      the iterated product algorithm of \cite{BCH86} can take their product in
      $O(\log n)$ time using $n^{1+\epsilon}$ processors.
    Dividing $u$ by this product can be done at the same cost.

    We repeat this for $v$, and obtain a similar vector.

    We combine these two vectors using a minimum operation,
      and take the product of the entries, to obtain the
      shared prime power divisors of $u,v$ which must be saved for Step 4.

    The total cost of this step is $O(\log n)$ time using
      $O(n^{2+\epsilon}B/\log B)$ processors.

    See also \cite{DLX09} and references therein.
    
  \item[Step 3.]
     Checking for zero takes $O(\log n)$ time using $O(n)$ processors.

  \item[Step 3.(a)i]
      For each $j$, choosing an $n$-bit number at random 
      takes constant time using $O(n)$ processors.
      We reduce it modulo $v_i$ in $O(\log n)$ time using $n^{1+\epsilon}$
        processors \cite{BCH86}.
  \item[Step 3.(a)ii]
      This is simply a multiplication and a division, again taking
      $O(\log n)$ time using $n^{1+\epsilon}$ processors.
  \item[Step 3.(a)iii]
      Here we use the same method as described in Step 2 above.
      This is $O(\log n)$ time using
        $O(n^{2+\epsilon}B/\log B)$ processors for each $j$.
  \item[Step 3.(a)]
     And so, the total cost of this parallel step is
     $O(\log B)$ time using $O( n^{2+\epsilon}(\log n)B^2/\log B)$ processors.
  \item[Step 3.(b)]
     This can be done in $O(\log (B\log n))=O(\log B)$ 
       time using $O(B\log n)$ processors.
  \item[Step 3.(c)]
     This takes constant time using $O(n)$ processors.

     We conclude that the cost of one main loop iteration is
     $O(\log B)$ time using $O( n^{2+\epsilon}(\log n)B^2/\log B)$ processors.
     Step 3.(a)iii is the bottleneck.

     Earlier we showed that the number of iterations is
     $O(n \log\log B /(\log B)^2)$, for a total time of
     $O(n \log\log B /\log B)$ for all iterations of the the main loop.

  \item[Step 4.]
     This is an addition, a division, and a multiplication 
      using the results from Step 2;
      $O(\log n)$ time using $n^{1+\epsilon}$ processors.
\end{description}
Clearly, the bottleneck of the algorithm is Step 3.(a).
The overall complexity is
\begin{eqnarray*}
   O\left( \frac{ n \log\log B}{\log B} \right)
     &=&
   O\left( \frac{ n \log\log n}{\log n} \right) \quad
     \mbox{time, and} \\
   O\left( n^{2+\epsilon}(\log n)\frac{B^2}{\log B}\right)
     &=&
    O(n^{6+\epsilon}) \quad \mbox{processors, where $\epsilon>0$,}
\end{eqnarray*}
for the EREW PRAM.
This completes our proof of Theorem \ref{thm:main}.

\medskip

One could take $B$ to be superpolynomial in $n$;
  for example, if $B=\exp[\sqrt{n}]$ we can obtain a
  running time of roughly $\sqrt{n}$ using $\exp[ O(\sqrt{n}) ]$ processors.
Similar results could be obtained from some of the
  CRCW PRAM algorithms mentioned in the introduction by porting them to the
  EREW PRAM and setting parameters appropriately.

We can also obtain an $O(n/\log n)$ running time on the randomized CRCW PRAM;
  see \cite{BS07} for how to perform the necessary main loop operations
  in $O(\log n / \log\log n)$ time
  via the explicit Chinese remainder theorem.
See also \cite{DLX09}.

It would be interesting to see if this algorithm can be modified to
  compute Jacobi symbols quickly in parallel.
See \cite{MES98} and references therein.

  \section{Numbers with Smooth Divisors}\label{sec:smooth}

Let $P(n)$ denote the largest prime divisor of $n$.
If $P(n)\le y$ we say that $n$ is $y$-smooth.
Let
  $$ \Psi(x,y)= \#\{ n\le x \,:\,P(n)\le y \},$$
the number of integers $\le x$ that are $y$-smooth.
Let $u=u(x,y):=\log x/\log y$.
We will make use of the following lemma.
\begin{lemma}[{\cite[Corollary 1.3]{HT93}}]\label{lemma:smooth}
Let $\epsilon>0$ and assume $u<y^{1-\epsilon}$.  Then
$$ \Psi(x,y) = x u^{-u(1+o(1))} $$
for $x>y\ge 2$.
\end{lemma}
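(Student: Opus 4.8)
Since the statement is quoted directly from \cite{HT93}, the cleanest route in the body of the paper is simply to invoke that reference; but here is the strategy I would follow to establish it from first principles. The plan is to prove matching upper and lower bounds, each of the shape $x\,u^{-u(1+o(1))}$, valid uniformly in the range $u<y^{1-\epsilon}$.

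For the upper bound I would use Rankin's trick. For any $\sigma\in(0,1)$, since $(x/n)^\sigma\ge 1$ for the retained terms $n\le x$ and all terms are nonnegative,
$$ \Psi(x,y) \;\le\; \sum_{P(n)\le y} (x/n)^\sigma \;=\; x^\sigma \prod_{p\le y}\left(1-p^{-\sigma}\right)^{-1}, $$
where the last equality is the Euler factorization of the smooth Dirichlet series. Taking logarithms gives $\log\Psi(x,y)\le \sigma\log x + \sum_{p\le y}\log(1-p^{-\sigma})^{-1}$, and I would estimate the prime sum by partial summation against $\pi(t)\sim t/\log t$, i.e. $\sum_{p\le y}p^{-\sigma}\approx\int_2^y t^{-\sigma}\,dt/\log t$. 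Choosing the saddle point $\sigma=\alpha(x,y)\approx 1-(\log u)/\log y$ — which stays safely inside $(0,1)$ precisely because $u<y^{1-\epsilon}$ forces $\alpha>\epsilon/2$ — and carrying out the (routine but delicate) optimization yields $\log\Psi(x,y)\le \log x - u\log u\,(1+o(1))$, the desired upper bound.

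For the lower bound, the harder half, I would pass through the Dickman--de Bruijn function $\rho$, using two ingredients: (i) $\Psi(x,y)\sim x\rho(u)$ uniformly in the stated range, and (ii) $\rho(u)=u^{-u(1+o(1))}$. For (ii) I would exploit the delay-differential equation $u\rho'(u)=-\rho(u-1)$ defining $\rho$: a Laplace-transform/saddle-point analysis gives $\rho(u)=\exp\!\big(-u(\log u+\log\log u-1+o(1))\big)$, which is exactly $u^{-u(1+o(1))}$. For (i) I would set up Hildebrand's functional equation (the smooth analogue of $\log n=\sum_{d\mid n}\Lambda(d)$),
$$ \Psi(x,y)\log x = \int_1^x \frac{\Psi(t,y)}{t}\,dt + \sum_{p\le y}\log p \sum_{k\ge 1,\ p^k\le x}\Psi(x/p^k,y), $$
and run an induction on $u$, comparing this discrete recursion against the differential equation for $\rho$ while controlling the accumulated error. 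Combining (i) and (ii) gives $\Psi(x,y)\ge x\,u^{-u(1+o(1))}$, matching the upper bound.

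The main obstacle is the lower bound, and within it, ingredient (i): making the comparison between the arithmetic quantity $\Psi$ and the continuous model $x\rho(u)$ uniform across the \emph{entire} range $u<y^{1-\epsilon}$ requires careful control of the error terms in the functional equation, and this is where all the genuine analytic difficulty (uniformity of prime counts, saddle-point estimates) resides. The upper bound, by contrast, is essentially elementary once Rankin's trick and the saddle-point choice of $\sigma$ are in hand. Because this is a standard and sharp result, in the paper itself I would simply cite \cite{HT93} rather than reproduce the argument.
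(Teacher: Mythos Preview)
Your instinct is exactly right: the paper gives no proof of this lemma at all --- it simply states it with the citation to \cite[Corollary~1.3]{HT93} and moves on, noting only that the $o(1)$ tends to zero for large $u$ and that sharper results exist. So your proposal matches the paper's treatment precisely; the Rankin upper bound and Hildebrand functional-equation lower bound you sketch are indeed the standard ingredients behind the cited result, but none of that appears in the paper itself.
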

Note that the $o(1)$ here tends to zero for large $u$, and the
  implied constant depends on $\epsilon$.
Better results are known, but this suffices for our purposes.
For additional references see \cite{HT93,Tenenbaum},
  and for references on approximation algorithms for $\Psi(x,y)$
  see \cite{PS06}.

We recall the definition of $H_k$, the $k$th harmonic number as
$$ H_k = \sum_{i=1}^k \frac1i.$$
It is well known that
  $H_k = \log k + \gamma + O(1/k)$, where $\gamma=0.57721\ldots$ 
  is Euler's constant
(for example, see \cite[4.5.4]{PB}).

Fix a constant $c>0$.
Define $B(x)$ to be a strictly increasing function of $x$, but with
  $\log B(x) = o(\log x)$.
(We are primarily interested in $B(x)$ polynomial in $\log x$.)
Define
\begin{eqnarray*}
  W(x)&:=& \frac{c \cdot (\log B(x))^2}{\log\log B(x)}, \\
  F(x)&:=& \#\{ n\le x \,: \, n=my, \ P(m)\le B(x), \  \log m \ge W(x) \}.
\end{eqnarray*}
In other words,
  $F(x)$ counts integers $n\le x$
   where $n$ has a $B(x)$-smooth divisor that is $\ge\exp W(x)$.

\newcommand{\deltap}{(1+\delta)}
\newcommand{\deltam}{(1-\delta)}
\begin{thm}\label{thm:smooth}
Let $\epsilon>0$.
  For sufficiently
  large $x$ we have
  $$ F(x) \ge \frac{x}{ B(x)^{c(1+\epsilon)}}. $$
\end{thm}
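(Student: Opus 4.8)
The plan is to lower-bound $F(x)$ by counting integers $n \le x$ of the form $n = my$ where $m$ is $B(x)$-smooth with $\exp W(x) \le m \le$ some convenient bound, and $y$ ranges freely over the available range. First I would fix a smoothness parameter: set $m$ to range over $B(x)$-smooth integers in a dyadic-type window, say $\exp W(x) \le m \le \exp(2W(x))$ — or more simply, $m$ a $B(x)$-smooth integer with $m \le x^{1/2}$ (any fixed exponent less than $1$ works), so that $m \ge \exp W(x)$ is the binding constraint since $W(x) = o(\log x)$. For each such $m$, there are at least $\lfloor x/m \rfloor \ge x/(2m)$ choices of $y$ with $my \le x$. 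This gives
$$
  F(x) \;\ge\; \sum_{\substack{m \le x^{1/2},\ P(m) \le B(x) \\ \log m \ge W(x)}} \frac{x}{2m}.
$$
Rather than summing $1/m$ directly, I would instead just keep a single term, or a modest count of terms: it suffices to exhibit one $B(x)$-smooth $m$ with $\log m \ge W(x)$ and $m$ not too large, and then $F(x) \ge x/(2m)$.

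So the core of the argument is: produce a $B(x)$-smooth integer $m$ with $W(x) \le \log m \le (1+\epsilon')W(x)$ (say), which forces $m \le \exp((1+\epsilon')W(x)) = B(x)^{c(1+\epsilon')/\log\log B(x) \cdot \log B(x)}$... — more cleanly, I want $m = B(x)^{O(\log B(x)/\log\log B(x))}$, so that $x/(2m) \ge x/B(x)^{c(1+\epsilon)}$ once the exponents are matched. To produce such $m$, take the product of the first several primes: by the prime number theorem, $\prod_{p \le B} p = \exp((1+o(1))B)$, which badly overshoots. Instead I would multiply together primes in a controlled way — e.g. take $m$ to be a product of primes all $\le B(x)$, chosen so that $\log m$ lands in the target window; since each prime contributes at most $\log B(x)$ to $\log m$, one can always hit the window $[W(x),\ W(x)+\log B(x)]$ by greedily including primes. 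The key estimate: how many primes $k$ do I need? Roughly $k \approx W(x)/\log B(x) = c\log B(x)/\log\log B(x)$ of them, and using the $j$th prime is $\approx j\log j$, the product of the first $k$ primes has logarithm $\approx \sum_{j\le k}\log p_j = \vartheta(p_k) \approx p_k \approx k\log k$. Setting this equal to $W(x)$ and solving gives $k \sim c(\log B(x))^2/(\log\log B(x))^2$ — wait, this needs care; I expect the cleanest route is actually to count, via Lemma~\ref{lemma:smooth}, the number of $B(x)$-smooth integers up to $T := \exp((1+\epsilon)W(x))$, which is $\Psi(T, B(x))$ with $u = \log T/\log B(x) = (1+\epsilon)c\log B(x)/\log\log B(x)$, hence $u^{-u(1+o(1))} = B(x)^{-c(1+\epsilon)(1+o(1))}$, so $\Psi(T,B(x)) = T \cdot B(x)^{-c(1+\epsilon)(1+o(1))}$, a quantity that is $\gg 1$ and in fact tends to infinity; subtracting off $\Psi(\exp W(x), B(x)) = o(\Psi(T,B(x)))$ leaves many $B(x)$-smooth $m$ in the window $[\exp W(x),\ T]$.

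Then for each such $m$ we have $m \le T = \exp((1+\epsilon)W(x)) = B(x)^{c(1+\epsilon)\log B(x)/\log\log B(x)}$... — hmm, that exponent is $\omega(1)$, which is too big for a clean $x/B(x)^{c(1+\epsilon)}$ bound. I need to be more careful about the exact shape of $W$: $W(x) = c(\log B)^2/\log\log B$, so $\exp W(x) = B^{c\log B/\log\log B}$, and I want $x/m$ comparable to $x/B^{c(1+\epsilon)}$, i.e. $m \le B^{c(1+\epsilon)}$ — but $\exp W(x) \gg B^{c(1+\epsilon)}$. So the right reading must be that we use the full sum $\sum 1/m$ over $B$-smooth $m \ge \exp W(x)$, and this tail sum is $\ge B^{-c(1+\epsilon)}$; equivalently, $F(x)/x \ge \sum_{m} 1/m \cdot (1 - O(m/x)) \approx \sum_m 1/m$, and the claim is that the density of $n$ with a $B$-smooth divisor $\ge \exp W(x)$ is at least $B^{-c(1+\epsilon)}$. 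The cleanest derivation: the number of $n \le x$ divisible by a \emph{fixed} $B$-smooth $m$ with $\log m \approx W(x)$ is $\sim x/m$, and summing $1/m$ over all $B$-smooth $m \in [\exp W, 2\exp W]$ gives $\ge \Psi(2\exp W, B)/(2\exp W) - \Psi(\exp W,B)/(\exp W)$, which by Lemma~\ref{lemma:smooth} with $u = \log(\exp W)/\log B = c\log B/\log\log B$ equals $u^{-u(1+o(1))}/2 = B^{-c(1+o(1))}$, and choosing the window constant against $\epsilon$ absorbs the $o(1)$ into $B^{-c(1+\epsilon)}$. Overcounting (an $n$ with two such divisors counted twice) only helps the lower bound here since we're bounding $F(x)$ from below by a sum over $m$ of indicator contributions — actually I must instead bound $F(x)$ directly as $\ge$ (number of $n \le x$ divisible by some $m$ in the window), and by inclusion-exclusion the first term $\sum_m \lfloor x/m\rfloor$ dominates; a short second-moment or Bonferroni argument controls the overcount. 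The main obstacle, then, is exactly this: carefully extracting from Lemma~\ref{lemma:smooth} that the relevant smooth-number count at the threshold $\exp W(x)$ is $x \cdot B(x)^{-c(1+o(1))}$ and then managing the $o(1)$ and the overcounting so the final constant in the exponent is $c(1+\epsilon)$ rather than something larger — everything else (the trivial $\lfloor x/m\rfloor \ge x/(2m)$ step, the asymptotics of $u^{-u}$) is routine.
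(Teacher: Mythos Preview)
Your eventual approach --- count $B(x)$-smooth $m$ in a window just above $\exp W(x)$ via Lemma~\ref{lemma:smooth}, observe that $u=\log m/\log B\approx c\log B/\log\log B$ so that $u^{-u(1+o(1))}=B^{-c(1+o(1))}$, and then sum $x/m$ over such $m$ --- is exactly the paper's argument, except that the paper sums over the cofactor $y$ in the dual window $[x/\exp((1+\delta)W),\,x/\exp W]$ rather than over $m$. The harmonic-sum step $\sum 1/t\approx\log(\cdot)$ and the $u\log u$ computation are identical in both versions; your exploratory detours (building a single $m$ by hand, etc.) are unnecessary once you commit to the $\Psi$ estimate.

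You are right to flag the overcounting: the inequality $F(x)\ge\sum_m\lfloor x/m\rfloor$ is false as written, since an integer $n$ with several $B$-smooth divisors $\ge\exp W$ is counted repeatedly on the right. The paper's displayed identity $F(x)=\sum_y\Psi(x/y,B)$ has the same defect and is not literally correct either; it too is an upper bound on $F(x)$, which is the wrong direction. Rather than a Bonferroni or second-moment patch, the clean repair is to make the decomposition $n=my$ \emph{unique} by taking $m$ to be the full $B$-smooth part of $n$ (so $y$ has no prime factor $\le B$). Then for each fixed $B$-smooth $m$ in your window the number of admissible $y\le x/m$ is, by Mertens and an elementary sieve (valid since $x/m\ge x^{1-o(1)}\gg B^C$ here), at least a constant times $(x/m)/\log B$; the extra factor $1/\log B$ is absorbed into $B^{-c\epsilon}$. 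With that adjustment both your argument and the paper's go through cleanly.
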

\begin{proof}
Choose $\delta>0$ such that $\deltap^3 < 1+\epsilon$.
From the definition, we have
\begin{eqnarray*}
  F(x) &=& \sum_{y=1}^{x/\exp[W(x)]}\Psi\left(\frac{x}{y},B(x)\right).
\end{eqnarray*}
First, we limit the range of summation to obtain the lower bound
\begin{eqnarray*}
  F(x) & \ge &
  \sum_{y=x/(\exp[\deltap W(x)])}^{x/\exp[W(x)]}\Psi\left(\frac{x}{y},B(x)\right).
\end{eqnarray*}
Next, we apply Lemma \ref{lemma:smooth}.
We also observe that 
  $u^{-u(1+o(1))}\ge u^{-\deltap u}$ for large $u$,
  and for a lower bound, we can fix $u$ at its largest value on the
  interval of summation, namely $u= u(x)=\deltap W(x)/\log B(x)$.
This gives us
\begin{eqnarray*}
  F(x) & \ge &
  \sum_{y=x/(\exp[\deltap W(x)])}^{x/\exp[W(x)]}
     \frac{x}{y} \cdot  u^{-\deltap u}.
\end{eqnarray*}
Using $\sum_a^b 1/t = H_b-H_a \ge \deltam\log (b/a) $ 
  for sufficiently large $a$, we obtain that
\begin{eqnarray*}
  F(x) & \ge & x \cdot \delta\deltam W(x) \cdot u^{-\deltap u}  \\
       & \ge & x \cdot u^{-\deltap u} 
\end{eqnarray*}
as $W$ is a strictly increasing function of $x$ for large $x$.
Next we plug in for $u$ as follows:
\begin{eqnarray*}
  \log ( u^{-\deltap u} ) & = & -\deltap u \log u \\
    & = & -\deltap \frac{\deltap W(x)}{\log B(x)}
             \log \left(\frac{\deltap W(x)}{\log B(x)}\right) \\
    & = & -\deltap 
  \frac{\deltap\, c \, \log B(x)}{\log\log B(x)}
 \log \left( \frac{\deltap\, c \, \log B(x)}{\log\log B(x)} \right) \\
    &\ge & -  c\, \deltap^3 \log B(x)
\end{eqnarray*}
for $x$ sufficiently large.
We now have
\begin{eqnarray*}
  F(x) & \ge & x \cdot B(x)^{- c \deltap^3 } 
\end{eqnarray*}
for sufficiently large $x$.
\end{proof}

Only a lower bound is needed for our purposes, 
  but one can obtain an upper bound on $F(x)$ of similar shape
  using the same general methods.

\openup -0.5\baselineskip
\nocite{Sedjelmaci2001,Sedjelmaci2001b,Sedjelmaci2004,Sedjelmaci2008}


\end{document}